\newcommand{\ltype}{\vdash}
\newcommand{\rtype}{\dashv}
\newcommand{\lrtype}{{\ltype\!\!\!\rtype}}
\newcommand{\Ac}{\mathcal{A}}
\newcommand{\Bc}{\mathcal{B}}
\newcommand{\Sc}{\mathcal{C}}
\newcommand{\Dc}{\mathcal{D}}
\newcommand{\Nc}{\mathcal{N}}
\newcommand{\Eu}{\underline{E}}
\newcommand{\Iu}{\underline{I}}
\newcommand{\Tu}{\underline{T}}
\newcommand{\Fu}{\underline{F}}
\newcommand{\Ju}{\underline{J}}
\newcommand{\Uu}{\underline{U}}
\newcommand{\B}{\mathbb{B}}
\newcommand{\N}{\mathbb{N}}
\newcommand{\K}{\mathbb{K}}
\newcommand{\zero}{0_\K}
\newcommand{\one}{1_\K}
\newcommand{\supp}[1]{\underline{#1}}
\newcommand{\Tr}[3]{#1 \xrightarrow{\ #2\ } #3}
\newlength{\VariableStateHeight}
\newlength{\VariableStateITUDPos}
\newcommand{\StateVVar}[3][]%
{\StateStyle%
%\settowidth{\VariableStateWidth}{\scalebox{\StateLabelSca}{\scalebox{\StateLabelScale}{$#1$}}}%
% \addtolength{\VariableStateWidth}{\ExtraSpace}
\settoheight{\VariableStateHeight}{\scalebox{\StateLabelSca}{\scalebox{\StateLabelScale}{$#1$}}}%
 \addtolength{\VariableStateHeight}{\ExtraSpace}
% \ifthenelse{\lengthtest{\VariableStateWidth < \VariableStateIntDiam}}%
%       {\setlength{\VariableStateWidth}{\VariableStateIntDiam}}{}%
 \ifthenelse{\lengthtest{\VariableStateHeight < \VariableStateIntDiam}}%
       {\setlength{\VariableStateHeight}{\VariableStateIntDiam}}{}%
% \setlength{\VariableStateITPos}{\ArrowOnStateCoef\StateDiam}%
% \addtolength{\VariableStateITPos}{0.5\VariableStateWidth}%
% \addtolength{\VariableStateITPos}{-0.5\StateDiam}%
 \setlength{\VariableStateITUDPos}{\ArrowOnStateCoef\StateDiam}%
 \addtolength{\VariableStateITUDPos}{0.8\VariableStateHeight}%
 \addtolength{\VariableStateITUDPos}{-0.5\StateDiam}%
 \rput#2{\pnode(\ArrowOnStateCoef\StateDiam,0){#3e}%
         \pnode(-\ArrowOnStateCoef\StateDiam,0){#3w}%
         \pnode(0,\VariableStateITUDPos){#3n}%
         \pnode(0,-\VariableStateITUDPos){#3s}}%
\rput#2{\rnode{#3}{\psframebox[framearc=1]{\protect\rule[-.5\VariableStateHeight]{0pt}{1.7\VariableStateHeight}\protect\rule{\VariableStateIntDiam}{0pt}}}}
 \rput#2{\VaucStateRBLabel{#1}}%
}%
\newcommand{\StateDVar}[3][]%
{\StateStyle%
\settowidth{\VariableStateWidth}{\scalebox{\StateLabelSca}{\scalebox{\StateLabelScale}{$#1$}}}%
 \addtolength{\VariableStateWidth}{\ExtraSpace}
\settoheight{\VariableStateHeight}{\scalebox{\StateLabelSca}{\scalebox{\StateLabelScale}{$#1$}}}%
 \addtolength{\VariableStateHeight}{\ExtraSpace}
 \ifthenelse{\lengthtest{\VariableStateWidth < \VariableStateIntDiam}}%
       {\setlength{\VariableStateWidth}{\VariableStateIntDiam}}{}%
 \ifthenelse{\lengthtest{\VariableStateHeight < \VariableStateIntDiam}}%
       {\setlength{\VariableStateHeight}{\VariableStateIntDiam}}{}%
 \setlength{\VariableStateITPos}{\ArrowOnStateCoef\StateDiam}%
 \addtolength{\VariableStateITPos}{0.5\VariableStateWidth}%
 \addtolength{\VariableStateITPos}{-0.5\StateDiam}%
 \setlength{\VariableStateITUDPos}{\ArrowOnStateCoef\StateDiam}%
 \addtolength{\VariableStateITUDPos}{0.8\VariableStateHeight}%
 \addtolength{\VariableStateITUDPos}{-0.5\StateDiam}%
 \rput#2{\pnode(\VariableStateITPos,0){#3e}%
         \pnode(-\VariableStateITPos,0){#3w}%
         \pnode(0,\ArrowOnStateCoef\StateDiam){#3n}%
         \pnode(0,-\ArrowOnStateCoef\StateDiam){#3s}}%
\rput#2{\rnode{#3}{\psframebox[framearc=.7]{\protect\rule[-.5\VariableStateHeight]{0pt}{1.7\VariableStateHeight}\protect\rule{\VariableStateWidth}{0pt}}}}
 \rput#2{\VaucStateRBLabel{#1}}%
}%
\newtheorem{example}{Example}
\newtheorem{definition}{Definition}
\newtheorem{proposition}{Proposition}
\newtheorem{lemma}{Lemma}
\newtheorem{theorem}{Theorem}
\newtheorem{remark}{Remark}
\newtheorem{corollary}{Corollary}
\title{On Determinism and Unambiguity of Weighted Two-way Automata}
\author{Vincent Carnino
\institute{LIGM - Laboratoire d'informatique Gaspard-Monge\\Universit{\'e}
Paris-Est Marne-la-Vall\'ee, France}
\email{Vincent.Carnino@univ-mlv.fr}
\and
Sylvain Lombardy
\institute{LaBRI - Laboratoire Bordelais de Recherche en Informatique\\
Institut Polytechnique de Bordeaux, France}
\email{Sylvain.Lombardy@labri.fr}}
\begin{document}
\maketitle

\begin{abstract}
In this paper, we first study the conversion of weighted two-way automata to one-way automata.
We show that this conversion preserves the unambiguity but does not preserve the determinism.
Yet, we prove that the conversion of an unambiguous weighted one-way automaton into a two-way automaton leads to
a deterministic two-way automaton.
As a consequence, we prove that unambiguous weighted two-way automata are equivalent to deterministic
weighted two-way automata in commutative semirings.
\end{abstract}

\section{Introduction}

A classical question in automata theory concerns the expressive power of a device and especially the difference between one-way devices and two-way devices. It is well known that two-way automata may be reduced to one-way automata and therefore recognize the same language family~\cite{Shepherdson59,RabinScott59}.

In this paper, we deal with the weighted versions of these two devices.
We describe the conversion of a two-way automaton over a commutative sering into a one-way automaton.
Such an algorithm has already be stated in~\cite{Anselmo90}; our construction is close, but we are mainly interested here in proving that this conversion preserves the unambiguity of automata; it does not preserve the determinism.

We then present a construction for the conversion of any unambiguous one-way automaton into a deterministic two-way automaton; this part does not require that the semiring is commutative.

A consequence of these two procedure is that, on commutative semirings, opposite to the case of one-way automata,
unambiguous two-way automata are not more powerful than deterministic ones.

\section{Weighted Two-way Automata}
\subsection{Automata and runs}
A semiring $\K$ is a set endowed with two binary associative operations, $\oplus$ and $\otimes$,
such that $\oplus$ is commutative and $\otimes$ distributes over $\oplus$. The set $\K$ contains two particular
elements, $\zero$ and $\one$ that are respectively neutral for $\oplus$ and $\otimes$; moreover, $\zero$ is an
annihilator for $\otimes$.

For every alphabet $A$, we assume that there exist two fresh symbols $\ltype$ and $\rtype$ that are marks
at the beginning and the end of the tapes of automata.
We denote $A_\lrtype$ the alphabet $A\cup\{\ltype,\rtype\}$. For every word $w$ in $A$,
$w_\lrtype$ is the word in $A_\lrtype$ equal to $\ltype w\rtype$.

One-way and two-way $\K$-automata share a part of their definition.
A $\K$-automaton is a tuple $\Ac=(Q,A,E,I,T)$ where
$Q$ is a finite set of states,
$A$ is a finite alphabet,
and $I$ and $T$ are partial functions from $Q$ to $\K$.
The support of $I$, $\Iu$, is the set of initial states of $\Ac$,
and the support of $T$, $\Tu$, is the set of final states of $\Ac$.

The definition of transitions differ.
In a two-way $\K$-automaton,
$E$ is a partial function from $Q\times (A_\lrtype \times \{-1,+1\}) \times Q$ into $\K$ and
the support of $E$, $\Eu$, is the set of transitions of $\Ac$.
Moreover, the intersection of $\Eu$ and $Q\times (\{\ltype\}\times \{-1\}\cup \{\rtype\}\times\{1\})\times Q$ must be empty.
\\Let $t$ be a transition in $\Eu$; if $t=(p,a,d,q)$, we denote $\sigma(t)=p$, $\tau(t)=q$, $\lambda(t)=a$, $\delta(t)=d$.
On figures, the value of $\delta$ is represented by a left (-1) or right (+1) arrow. For instance, if $t=(p,a,-1,q)$ and $E_t=k$, we draw
$\Tr{p}{a,\leftarrow|k}{q}$.

In a one-way $\K$-automaton,
$E$ is a partial function from $Q\times  A \times Q$ into $\K$,
and the support of $E$, $\Eu$, is the set of transitions of $\Ac$.
\\Let $t$ be a transition in $\Eu$; if $t=(p,a,q)$, we denote $\sigma(t)=p$, $\tau(t)=q$, $\lambda(t)=a$.

%The idea behind this kind of automata is that in addition to reading a letter, each transition is weigh%ted and indicates whether the head is moving forward or backward. We now give an example of a two-way d%istance automaton :
\begin{example}
Let $\Ac_1$ be the two-way $\mathcal{N}$-automaton of
Figure~\ref{fig:twoway}, where $\mathcal{N}=(\N\cup\{\infty\},\min,+)$ is the tropical semiring; since the multiplication law in this semiring is the usual sum, the weight of a path in this automaton is the sum of the weights of its transitions. This automaton is deterministic ({\it cf.} Definition~\ref{def:deterministic}) and thus there is only one computation for each accepted word. The behaviour of this automaton is quite easy. For each block of $'a'$ it checks through a left-right reading, whether the length of the block is odd; if it is, a right-left reading computes the length of the block; otherwise the automaton goes to the next block of $'a'$.
\end{example}
\begin{figure}
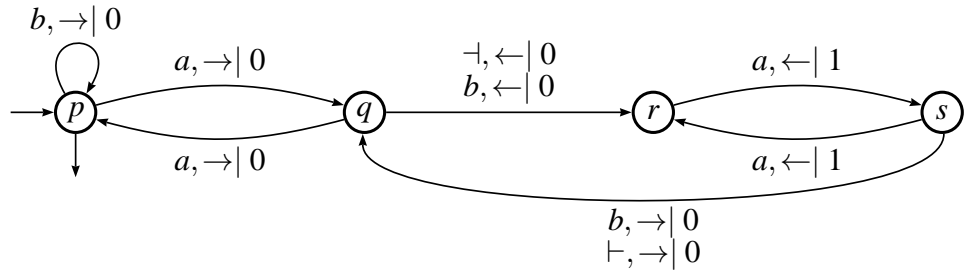
\FixVCScale{0.8}
\centering\VCCall{Dessins/twoway.tex}
\caption{The two-way $\mathcal{N}$-automaton $\Ac_1$.}
\label{fig:twoway}
\end{figure}

\begin{definition}
Let $w=w_1\dots w_n$ be a word of $A^*$, we set $w_0=\ \ltype$ and $w_{n+1}=\ \rtype$.
A \emph{configuration} of $\Ac$ on $w$ is a pair $(p,i)$ where $i$ is in $[0;n+1]$ and $p$ is a state of $\Ac$.
A computation (or run) $\rho$ of $\Ac$ on $w$ is a finite sequence of configurations $((p_0,i_0),\dots ,(p_k,i_k))$ such that :
\begin{itemize}
\item $i_0=1$, $i_k=n+1$, $p_0$ is in $\Iu$ and $p_k$ is in $\Tu$;
\item for every $j$ in $[0;k-1]$, there exists $t_j$, such that\\\hspace*{2cm} $\sigma(t_j)=p_j$, $\tau(t_j)=p_{j+1}$, $\lambda(t_j)=a_{i_j}$, and $i_{j+1}=i_j+\delta(t_j)$.
\end{itemize}
\end{definition}

The weight of such a computation, denoted by $|\rho|$, is $I(p_0)\otimes\bigotimes\limits_{j=0}^{k-1} E(t_j)\otimes  T(p_k)$. The weight of $w$ in $\Ac$, denoted by $\langle |\Ac|,w\rangle$, is the addition of the weights of all the runs with label $w$ in $\Ac$.
Notice that there may be an infinite number of computations with the same label $w$.
The definition of the behaviour of $\Ac$ in this case requires to study the definition of infinite sums. This can be done, like for one-way $\K$-automata with $\varepsilon$-transitions, for instance with complete semirings or topological semirings~\cite{LoSa13}.
This is not the purpose of this paper, since we mainly deal with two-way automata where the number of computations is finite for every word.
%We denote $Q(\rho)=(p\mid 1\leq i \leq k, (t,i)\in \rho \text{, }\sigma(t)=p)\ \| \ (\tau(t_k))$ the sequence of states encountered during $\rho$.

\begin{example}
A run of the $\Nc$-automaton~$\Ac_1$ over the word $abaaba$ is represented on Figure~\ref{fig:runabaaba}.
The weight of this run is equal to $2$.
\end{example}
\begin{figure}
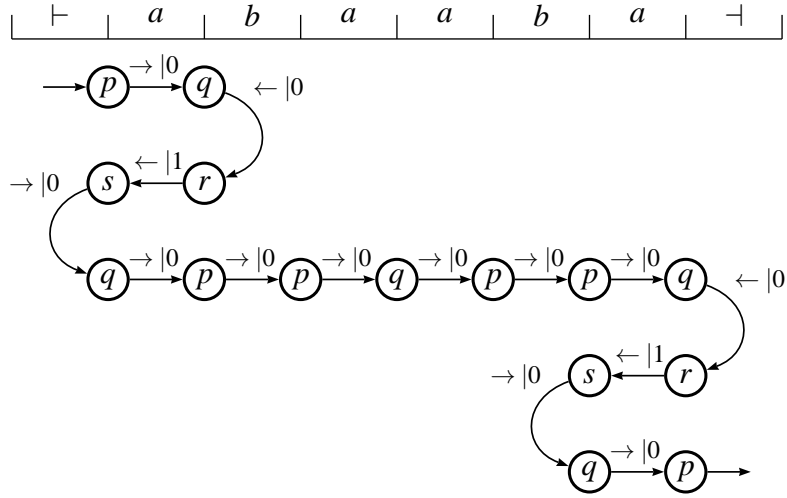
\FixVCScale{0.8}
\centering
\centering\VCCall{Dessins/runabaaba.tex}
\caption{A run of $\Ac_1$ over the word $abaaba$.}
\label{fig:runabaaba}
\end{figure}

\begin{definition}
Let $\rho=((p_0,i_0),\dots ,(p_k,i_k))$ be a run over $w$.
If there exists $m,n$ in $[1,k]$, with $m<n$ such that $(p_m,i_m)=(p_n,i_n)$,
then we say that $((p_m,i_m),\dots,(p_n,i_n))$ is an \emph{unmoving circuit} of $\rho$.
If $\rho$ does not contain any unmoving circuit, it is \emph{reduced}.
\end{definition}

\begin{lemma}\label{lem:red}
If a two-way $\K$-automaton admits a run $\rho$ which is not reduced,
it admits a reduced run with the same label.
\end{lemma}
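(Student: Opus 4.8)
The plan is to prove something stronger and purely local, namely that a single unmoving circuit can always be excised from a run, and then to iterate this excision. Since each excision strictly shortens the run, the process terminates, and a run that admits no further excision is by definition reduced.

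First I would fix a non-reduced run $\rho=((p_0,i_0),\dots,(p_k,i_k))$ over $w$. By hypothesis it contains an unmoving circuit, so there are indices $m<n$ in $[1,k]$ with $(p_m,i_m)=(p_n,i_n)$. I would then form the candidate
\[
\rho'=((p_0,i_0),\dots,(p_m,i_m),(p_{n+1},i_{n+1}),\dots,(p_k,i_k)),
\]
obtained by deleting the configurations at indices $m+1,\dots,n$; in the degenerate case $n=k$ the suffix is empty and $\rho'$ simply ends at $(p_m,i_m)$.

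The key step is to check that $\rho'$ is again a run over $w$. The boundary conditions are immediate: the first configuration is untouched, so $i_0=1$ and $p_0\in\Iu$, while the last configuration of $\rho'$ is either $(p_k,i_k)$ or, when $n=k$, the configuration $(p_m,i_m)=(p_k,i_k)$; in both cases it is a final configuration. Every consecutive pair lying inside the retained prefix or the retained suffix is witnessed by exactly the same transition as in $\rho$, so only the spliced pair $(p_m,i_m),(p_{n+1},i_{n+1})$ needs attention. Here I would reuse the transition $t_n$ by which $\rho$ leaves $(p_n,i_n)$: because $(p_m,i_m)=(p_n,i_n)$ we have $\sigma(t_n)=p_n=p_m$, $\lambda(t_n)=a_{i_n}=a_{i_m}$, $\tau(t_n)=p_{n+1}$ and $i_{n+1}=i_n+\delta(t_n)=i_m+\delta(t_n)$, so $t_n$ is a legitimate transition from $(p_m,i_m)$ to $(p_{n+1},i_{n+1})$. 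Hence $\rho'$ is a valid run, and since it is a run over the very same word $w$, it has the same label as $\rho$.

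Finally, as $m<n$, the run $\rho'$ is strictly shorter than $\rho$. I would close the argument by induction on the length $k$ (equivalently, by taking a run of minimal length among those over $w$): if such a run still contained an unmoving circuit, the excision above would yield an even shorter run with the same label, a contradiction, so the minimal one is reduced. I do not anticipate a genuine obstacle: this is the standard ``remove a cycle from a walk'' reduction, and the only points requiring care are the validity of the splice at the deletion boundary, handled by reusing $t_n$, and the degenerate case $n=k$ in which the repeated configuration is itself the final one.
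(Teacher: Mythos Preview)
Your proof is correct and follows the same idea as the paper's: excise an unmoving circuit to obtain a strictly shorter run with the same label, then conclude by minimality on length. Your version is in fact more careful than the paper's, which simply asserts that the spliced sequence is again a run without checking the boundary transition or the case $n=k$.
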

\begin{proof}
We consider a shortest non reduced run $\rho=((p_0,i_0),\dots,(p_m,i_m),\dots,(p_n,i_n),\dots ,(t_k,i_k))$, with
$(p_m,i_m)=(p_n,i_n)$.
\\Then $((p_0,i_0),\dots ,(p_{m-1},i_{m-1}),(p_n,i_n),\dots,(p_k,i_k))$
is a run; by minimality of $\rho$, this run is reduced.
\end{proof}

\begin{definition}
A one-way or two-way automaton $\Ac$ is \emph{unambiguous} if every word labels at most one computation.
\end{definition}
Unambiguous automata have obviously only reduced computations.

\subsection{Coverings}
We extend here the notion of covering ({\it cf.} \cite{Saka09}) to two-way automata.
\begin{definition}
Let $\Ac=(Q,A,E,I,T)$ and $\Bc=(R,A,F,J,U)$ be two weighted two-way automata.
A mapping $\varphi$ from $Q$ into $R$ is a \emph{morphism} if,
\\i) $\forall p\in\Iu$, $J({\varphi(p)})=I(p)$;
\\ii) $\forall p\in\Tu$, $U({\varphi(p)})=T(p)$;
\\iii) $\forall t=(p,a,\delta,q)\in\Eu$, $\tilde\varphi(t)=(\varphi(p),a,\delta,\varphi(q))\in\Fu$ and $F({\tilde\varphi(t')})=E(t)$.
\\The morphism is surjective if $\varphi(Q)=R$, $\varphi(\Iu)=\Ju$, $\varphi(\Tu)=\Uu$, and $\tilde\varphi(\Eu)=\Fu$.
\end{definition}

\begin{definition}
Let $\Ac=(Q,A,E,I,T)$ and $\Bc=(R,A,F,J,U)$ be two weighted two-way automata.
$\Ac$ is  a \emph{covering} of $\Bc$ if there exists a surjective morphism $\varphi$ from
$\Ac$ onto $\Bc$ such that
\begin{gather*}
i)\ \forall r\in\Uu,\ \varphi^{-1}(r)\subseteq\Tu
\qquad
ii)\ \forall r\in\Ju,\ \exists!p\in\varphi^{-1}(r)\cap\Iu
 \\
iii)\ \forall t\in\Fu,\forall p\in\varphi^{-1}(\sigma(t)),
\exists!t'\in\tilde\varphi^{-1}(t), \sigma(t')=p.
\end{gather*}
$\Ac$ is  an \emph{in-covering} of $\Bc$ is there exists a surjective morphism $\varphi$ from
$\Ac$ onto $\Bc$ such that
\begin{gather*}
i)\ \forall r\in\Ju,\ \varphi^{-1}(r)\subseteq\Iu
\qquad
ii)\ \forall r\in\Uu,\ \exists!p\in\varphi^{-1}(r)\cap\Tu
 \\
iii)\ \forall t\in\Fu,\forall q\in\varphi^{-1}(\tau(t)),
\exists!t'\in\tilde\varphi^{-1}(t), \tau(t')=q.
\end{gather*}
\end{definition}

\begin{proposition}
Let $\Ac$ and $\Bc$ be two weighted two-way automata.
If $\Ac$ is a covering ({\it resp.} an in-covering) of $\Bc$, the corresponding morphism $\varphi$ induces a bijection between computations of $\Ac$ and $\Bc$ such that every computation of $\Ac$ and its image in $\Bc$ have the same label and the same weight.
\end{proposition}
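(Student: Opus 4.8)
The plan is to handle both cases through the single map that $\varphi$ induces on runs, and to prove that it is a label- and weight-preserving bijection. First I would verify that $\varphi$ sends runs to runs, a fact that uses only the morphism axioms and holds irrespective of the covering hypothesis. Given a run $\rho=((p_0,i_0),\dots,(p_k,i_k))$ of $\Ac$ carried by transitions $t_0,\dots,t_{k-1}$, I set $\varphi(\rho)=((\varphi(p_0),i_0),\dots,(\varphi(p_k),i_k))$. Axiom (iii) for morphisms makes each $\tilde\varphi(t_j)=(\varphi(p_j),\lambda(t_j),\delta(t_j),\varphi(p_{j+1}))$ a transition of $\Bc$ with the same label and the same direction as $t_j$, so the positions $i_j$ are exactly those demanded of a run, while axioms (i) and (ii) give $\varphi(p_0)\in\Ju$ and $\varphi(p_k)\in\Uu$. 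The same three axioms, applied termwise as $J(\varphi(p_0))=I(p_0)$, $F(\tilde\varphi(t_j))=E(t_j)$ and $U(\varphi(p_k))=T(p_k)$, show $|\varphi(\rho)|=|\rho|$, and the label is visibly unchanged. Thus $\rho\mapsto\varphi(\rho)$ is well defined and preserves labels and weights, and everything reduces to proving bijectivity under the covering or in-covering hypothesis.

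For the covering case I would exhibit the inverse by lifting a run of $\Bc$ forward, one state at a time. Let $\pi=((r_0,i_0),\dots,(r_k,i_k))$ be a run of $\Bc$ with transitions $s_0,\dots,s_{k-1}$. Since $r_0\in\Ju$, clause (ii) of the covering definition gives a unique $p_0\in\varphi^{-1}(r_0)\cap\Iu$. Assuming $p_j$ with $\varphi(p_j)=r_j$ has been chosen, the state $p_j$ lies in $\varphi^{-1}(\sigma(s_j))$, so clause (iii) yields a unique $t_j\in\tilde\varphi^{-1}(s_j)$ with $\sigma(t_j)=p_j$; I put $p_{j+1}=\tau(t_j)$ and note $\varphi(p_{j+1})=\tau(s_j)=r_{j+1}$. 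Retaining the positions $i_j$ of $\pi$ is legitimate because $t_j$ and $s_j$ agree on label and direction, so $\rho=((p_0,i_0),\dots,(p_k,i_k))$ is a candidate run with $\varphi(\rho)=\pi$. The uniqueness built into (ii) and (iii) at every step simultaneously shows that any run of $\Ac$ projecting to $\pi$ must equal $\rho$, yielding surjectivity and injectivity at once. The in-covering case is the mirror image: I would lift $\pi$ backward from the unique $p_k\in\varphi^{-1}(r_k)\cap\Tu$ supplied by its clause (ii), choosing at each step the unique $t_j\in\tilde\varphi^{-1}(s_j)$ with $\tau(t_j)=p_{j+1}$ from its clause (iii).

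The step I expect to be most delicate, and the only place where the covering structure does more than merely determine the lift, is the boundary check. The lifting never selects the far endpoint: it is forced by the chain of transitions, so one must confirm that this forced endpoint is genuinely accepting. That is exactly the role of clause (i): for a covering, $\varphi(p_k)=r_k\in\Uu$ together with $\varphi^{-1}(\Uu)\subseteq\Tu$ forces $p_k\in\Tu$, and dually for an in-covering $\varphi^{-1}(\Ju)\subseteq\Iu$ forces the lifted $p_0$ to be initial. Verifying this, alongside the totality and single-valuedness furnished by the two uniqueness clauses, is the heart of the argument; label and weight preservation then come for free from the first paragraph, completing the bijection.
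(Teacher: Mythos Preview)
Your proof is correct and follows essentially the same approach as the paper: send runs forward via $\varphi$ using the morphism axioms, and construct the inverse by lifting runs of $\Bc$ step by step from the unique initial (resp.\ final) preimage, using the uniqueness clause~(iii) at each transition. Your treatment is in fact more careful than the paper's, which does not explicitly invoke clause~(i) to verify that the forced far endpoint of the lift is final (resp.\ initial).
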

\begin{proof}
Assume that $\Ac$ is a covering of $\Bc$.
Let $w$ be a word and let $((p_0,i_0),\ldots,(p_k,i_k))$ be a computation on $w$ in $\Ac$. For every $j$ in $[0;k]$, we set $r_j=\varphi(p_j)$;
by definition of a morphism $((r_0,i_0),\ldots,(r_k,i_k))$ is a computation on $w$ in $\Bc$ with the same weight.
Conversely, let $((r_0,i_0),\ldots,(r_k,i_k))$ be a computation in $\Bc$.
Let $p_0$ be the unique initial state in $\varphi^{-1}(r_0)$.For every $j$ in $[0;k-1]$, let $\delta_j=i_{j+1}-i_j$; the configuration $(r_{j+1},r_{j+1})$ is reached from configuration $(r_j,i_j)$ through the transition $(r_j,w_j,\delta_j,r_{j+1})$; inductively, we define $p_{j+1}$ as the unique state in $\varphi^{-1}(r_{j+1})$ such that $(p_j,w_j,\delta_j,p_{j+1})$ is a transition of $\Ac$. Then, $((p_0,i_0),\ldots,(p_k,i_k))$ is a computation on $w$ in $\Ac$. Hence, every computation $\rho$ of $\Bc$ is lift up in a unique way into a computation of $\Ac$ whose image by $\varphi$ is $\rho$.

The proof is similar for in-coverings.
\end{proof}
This proposition implies that a two-way automaton and its covering ({\it resp.} in-covering) are equivalent; moreover, if a two-way automaton is unambiguous, so is every of its (in-)coverings.

\subsection{$\delta$-Locality}
\begin{definition}
Let $\Ac$ be a two-way $\K$-automaton.
If, for each state $p$ of $\Ac$, every transition outgoing from $p$ has the same direction, then $\Ac$ is $\delta$-local.
%Likewise, if, for each state $p$ of $\Ac$, every transition outgoing from $p$ has the same direction and for every
%transition $t$ outgoing from an initial state, $\delta(t) = 1$, then $\Ac$ is $\delta$-out-local.
%If an automaton is both $\delta$-local and $\delta$-out-local, it is $\delta$-normalized.
\end{definition}
%In the sequel, we do not deal with $\delta$-out-local automata.

If $Q$ is the set of states of a two-way $\K$-automaton, we denote $Q_+$ ({\it resp.} $Q_-$) the set of states $p$
such that, for every transition $t$ outgoing from $p$, $\delta(t)=+1$ ({\it resp.} $\delta(t)=-1$); by convention, if $p$ has no outgoing transition, $p$ is in $Q_+$.
For every state $p$ of $Q_+$ ({\it resp.} $Q_-$), we set $\delta(p)=1$ ({\it resp.} $\delta(p)=-1$).

If $\Ac$ is a $\delta$-local automaton, $\{Q_+,Q_-\}$ is a partition of $Q$.

\begin{proposition}\label{prop:Ddet}
Every two-way $\K$-automaton admits a $\delta$-local in-covering.
\end{proposition}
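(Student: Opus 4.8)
The plan is to duplicate every state of the given automaton, tagging each copy with a direction, and to route the transitions so that a copy tagged $d$ emits only transitions of direction $d$. Concretely, starting from a two-way automaton $\Bc=(R,A,F,J,U)$, I would build $\Ac=(Q,A,E,I,T)$ with $Q=R\times\{-1,+1\}$ and take $\varphi\colon Q\to R$ to be the first projection, $\varphi(r,d)=r$. For the transitions, to each $f=(r,a,d,s)\in\Fu$ and each $d'\in\{-1,+1\}$ I associate a transition $((r,d),a,d,(s,d'))$ of $\Ac$ carrying the weight $F(f)$; since labels and directions are copied verbatim, no transition reads $\ltype$ leftwards or $\rtype$ rightwards. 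By construction every transition leaving the copy $(r,d)$ has direction $d$, so $(r,d)\in Q_+$ when $d=+1$ and $(r,d)\in Q_-$ when $d=-1$; hence $\Ac$ is $\delta$-local.

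Next I would fix the terminal data asymmetrically, so as to match the two uniqueness requirements of an in-covering (one on final states, one on incoming transitions). I would declare \emph{both} copies of an initial state to be initial, setting $I(r,d)=J(r)$ for every $r\in\Ju$, and a \emph{single} copy of each final state to be final, say $T(r,+1)=U(r)$ for every $r\in\Uu$. Checking that $\varphi$ is a morphism is then immediate: conditions i) and ii) hold because $I$ and $T$ equal $J\circ\varphi$ and $U\circ\varphi$ on their supports, and condition iii) holds because $\tilde\varphi((r,d),a,d,(s,d'))=(r,a,d,s)=f\in\Fu$ with the same weight. Surjectivity of $\varphi$ on states, initial states, final states and transitions is equally direct.

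The substance is the verification of the three in-covering axioms. Axiom i) is clear since $\varphi^{-1}(r)=\{(r,+1),(r,-1)\}$ and both are initial when $r\in\Ju$; axiom ii) holds because exactly one of these two copies, namely $(r,+1)$, is final. The key point is axiom iii): given $f=(r,a,d,s)\in\Fu$, the transitions of $\Ac$ mapping onto $f$ are exactly $((r,d),a,d,(s,+1))$ and $((r,d),a,d,(s,-1))$; the source copy is \emph{forced} to be $(r,d)$ by $\delta$-locality, whereas the target copy is free. Hence for each of the two elements $q$ of $\varphi^{-1}(\tau(f))=\{(s,+1),(s,-1)\}$ there is a unique lift with $\tau(\cdot)=q$. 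This is where the construction balances: tagging by the \emph{outgoing} direction pins down the source of a lift while leaving its target unconstrained, which is precisely what axiom iii), a condition on \emph{incoming} transitions, demands.

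I expect the main obstacle to be organizing this asymmetry correctly. Since $\delta$-locality constrains outgoing transitions, the tag must record the outgoing direction, yet an in-covering places its uniqueness on the incoming side and on final states. One must therefore resist the temptation to also make initial states or outgoing lifts unique, which would instead produce a \emph{covering}; with the roles assigned as above the three axioms hold, and the computation-lifting bijection then follows from the preceding proposition, so no ambiguity arises from keeping both initial copies active. Once the roles are sorted out, every remaining check reduces to the routine case analysis on $d,d'\in\{-1,+1\}$ sketched above.
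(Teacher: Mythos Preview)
Your argument is correct and follows essentially the same idea as the paper: tag each state by an outgoing direction, keep the source of a lift forced by that tag while leaving the target free, make all preimages of initial states initial, and single out one preimage of each final state. The only difference is cosmetic: the paper splits \emph{only} the states in $P=R\setminus(R_+\cup R_-)$, leaving $R_+\cup R_-$ untouched, whereas you duplicate every state uniformly into $R\times\{-1,+1\}$; your version is slightly simpler to state but produces superfluous dead-end copies such as $(r,-1)$ for $r\in R_+$, while the paper's is more economical.
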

\begin{proof}
In this proof, we denote $\pm=\{-1,+1\}$.
Let $\Ac = (R, A, F, J, U )$ be a two-way $\K$-automaton and let $P = R\setminus (R_+\cup R_-)$ be
the set of states in $\Ac$ such that there are at least two transitions with different direction outgoing
from each state. Let $P_+$ and $P_-$ be two copies of $P$ and
let $Q =R_+ \cup R_- \cup P_+ \cup P_-$. Let $\varphi$ be the canonical mapping from $Q$ onto $R$:
it maps every element of $P_+$ or $P_-$ onto the corresponding element of $P$.
Let $\tilde\varphi$ be the mapping from $Q \times A_\lrtype \times\pm\times Q$ into $R \times A_\lrtype \times\pm\times R$ defined by $\tilde\varphi(p, a, d, q) = (\varphi(p), a, d, \varphi(q))$.

Let $\Ac' = (Q, A, E, I, T )$ be the automaton defined by:
\begin{gather*}
\Iu = \varphi^{-1}(\Ju);\quad
\Tu =  \varphi^{-1}(\Uu)\setminus P_-;\\
\Eu = \{(p, a, d, q)\in\tilde\varphi^{-1}(\Fu) \mid (p,d)\in (P_+\cup R_+)\times\{+1\} \cup (P_-\cup R_-)\times\{-1\}
\};\\
\forall p\in\Iu,\ I(p)=J(\varphi(p)),\quad
\forall p\in\Tu,\ T(p)=U(\varphi(p)),\quad
\forall t\in\Eu,\ E(t)=F(\tilde\varphi(p)).
\end{gather*}
The automaton $\Ac'$ is $\delta$-local and it is an in-covering of $\Ac$.
\end{proof}

%Notice that turning an unambiguous two-way automaton into a $\delta$-local automaton
%preserves the unambiguity.
%
%The dual statement also holds.
%
%\begin{proposition}\label{prop:Dcdet}
%For every two-way $\K$-automaton, there exists an equivalent $\delta$-out-local two-way $\K$-automaton.
%\end{proposition}
%The proof is similar to the previous one with the outgoing transitions.
%
%Notice that the construction in the proof of Proposition~\ref{prop:Ddet} does not alter the direction of
%outgoing transitions. Therefore, starting from a 2-WFA $\Ac$, the application of Propositions~\ref{prop:Ddet} and~\ref{prop:Dcdet}
%gives a $\delta$-out-local and $\delta$-local automaton.
%
%\begin{corollary}
%For every two-way $\K$-automaton, there exists an equivalent $\delta$-normalized two-way $\K$-automaton.
%\end{corollary}
%Notice that this normalization preserves the unambiguity of automata.
%
\begin{example}
The automaton $\Ac_1$ of Figure~\ref{fig:twoway} is not $\delta$-local;
from state $q$ ({\it resp. $s$}), there are transitions leaving with $\delta=1$ and other ones with $\delta=-1$.
The automaton $\Ac_1'$ of Figure~\ref{fig:twowayd} is a $\delta$-local in-covering of
$\Ac_1$. Notice that an in-covering of a deterministic automaton is not necessarily deterministic.

Actually, on $\Ac_1'$, transitions $\Tr{s_+}{b,\rightarrow|0}{q_-}$ and $\Tr{s_+}{\ltype,\rightarrow|0}{q_-}$
do not belong to any computation, since the label of any trnasition that would follow one of these boths transitions should be the same as the label of the transition arriving at $s_-$ ($a$), and there is no transition outgoing from $q_-$ with label $a$.
\begin{figure}
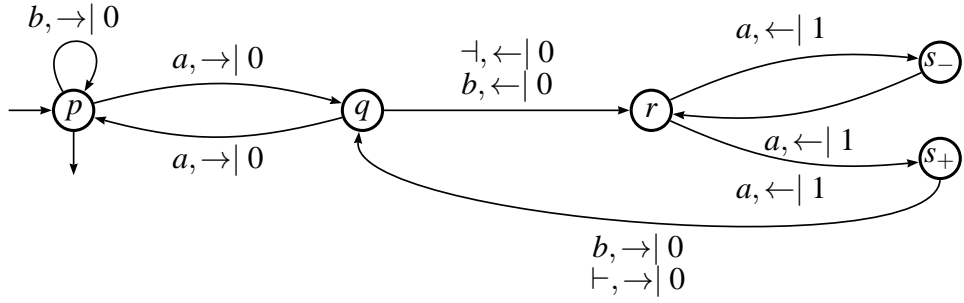
\FixVCScale{0.8}
\centering\VCCall{Dessins/twowayd.tex}
\caption{The $\delta$-local two-way distance automaton $\Ac_1'$.}
\label{fig:twowayd}
\end{figure}

\end{example}

\section{Slices}
In this section, we describe the conversion of two-way automata over commutative semirings into one-way automata.
We give sufficient conditions to get finite one-way automata.

\subsection{The Slice Automaton}
\begin{definition}
Let $\Ac=(Q,A,E,I,T)$ be a two-way $\K$-automaton and let $w=w_1\dots w_k$ be a word.
$\rho=((p_0,i_0),\dots (p_n,i_n))$ be a run over $w$, and $j$ in $[1;k+1]$.
Let $h$ be the  subsequence of all pairs $(p_r,i_r)$ such that
$(i_r,i_{r+1})=(j,j+1)$ or $(i_{r-1},i_r)=(j,j-1)$.
The $j$-th \emph{slice} of $\rho$ is the vector $s^{(j)}$ of states obtained by the projection
of the first component of each pair of $h$.
\\The \emph{signature} $S(\rho)$ of $\rho$ is the sequence of its slices.
\end{definition}
The slices we define here are not exactly the \emph{crossing sequences} defined in~\cite{Shepherdson59}.

\begin{example}
The vector $\left[\StackThreeLabels qrp\right]$
is the second (and the seventh) slice of the run of Figure~\ref{fig:runabaaba}.
The signature of this run is:
\begin{equation}
\left(
\StackThreeLabels psq,
\StackThreeLabels qrp,
\StackThreeLabels p{}{},
\StackThreeLabels q{}{},
\StackThreeLabels p{}{},
\StackThreeLabels psq,
\StackThreeLabels qrp
\right).
\end{equation}
The signature of the (unique) run on the word $abaaba$ in the automaton $\Ac_1'$
is
\begin{equation}
\left(
\StackThreeLabels p{s_+}{q_+},
\StackThreeLabels {q_-}rp,
\StackThreeLabels p{}{},
\StackThreeLabels {q_+}{}{},
\StackThreeLabels p{}{},
\StackThreeLabels p{s_+}{q_+},
\StackThreeLabels {q_-}rp
\right).
\end{equation}
\end{example}

Let $\Ac=(Q,A,E,I,T)$ be a $\delta$-local two-way $\K$-automaton.
To define a one-way $\K$-automaton from slices we consider the set $X$ of subvectors of slices,
that are vectors $v$ in $Q^*$ with an odd length; let $Y$ be the vectors $v$ in $Q^*$ with an even length.

We define inductively two partial functions $\theta:X\times A\times X\rightarrow \K$ and $\eta:Y\times A\times Y\rightarrow \K$ by:
\begin{equation}
\begin{split}
\eta(\varepsilon, a,\varepsilon) &= 0_\K,\\
\forall p,q\in Q,\qquad
\delta(p)=1\Longrightarrow\forall u,v\in Y,\  \theta(pu,a,qv)&=E(p,a,1,q)+\eta(u,a,v),\\
\eta(u,a,pqv)&=E(p,a,1,q)+\eta(u,a,v),\\
\delta(p)=-1\Longrightarrow\forall u,v\in X,\ \theta(pqu,a,v)&=E(p,a,-1,q)+\theta(u,a,v),\\
\eta(qu,a,pv)&=E(p,a,-1,q)+\theta(u,a,yv).\\
\end{split}
\label{eq.theta}
\end{equation}
Since $\Ac$ is $\delta$-local, for every triple $(u,a,v)$ in $X\times A\times X$,
if $\theta(u,a,v)$ is defined, it is uniquely defined.

For every vector $pu$ in $X$, $pu$ is initial if $p$ is in $\Iu$ and $(\varepsilon,\ltype,u)$ is in
$\underline{\eta}$; in this case, we set $\mathcal{I}(pu)=I(p)+\eta(\varepsilon,\ltype,u)$. Likewise, every vector $up$ in $X$ is final if $p$ is in $\Tu$ and $(u,\rtype,\varepsilon)$ is
in $\underline{\eta}$; in this case, we set $\mathcal{T}(up)=\eta(u,\rtype,\varepsilon)+T(p)$.
\begin{example}
For instance, with slices from automaton $\Ac_1$,
\begin{equation}
\begin{split}
\theta\left(\StackThreeLabels p{s_+}{q_+},a,\StackThreeLabels {q_-}rp\right)=&
E(p,a,1,q_-)+\eta\left(\StackTwoLabels {s_+}{q_+},a,\StackTwoLabels rp\right)\\
=&E(p,a,1,q_-)+E(r,a,-1,s_+)+\theta(q_+,a,p)\\
=&E(p,a,1,q_-)+E(r,a,-1,s_+)+E(q_+,a,1,p).\\
\end{split}
\end{equation}
The vector $\left[\StackThreeLabels p{s_+}{q_+}\right]$ is initial and
\begin{equation}
\mathcal{I}\left(\StackThreeLabels p{s_+}{q_+}\right)=I(p)+E(s_+,\ltype,1,q_+).
\end{equation}
\end{example}

\begin{definition}
Let $\Ac=(Q,A,E,I,T)$ be a two-way $\K$-automaton.
With the above notations,
the \emph{slice automaton} of $\Ac$ is
the infinite one-way $\K$-automaton~$\Sc=(X,A,\theta,\mathcal{I},\mathcal{T})$.
\end{definition}

\begin{proposition}\label{prop:slices}
Let $\K$ be a \emph{commutative} semiring and let $\Ac$ be a $\delta$-local two-way $\K$-automaton.
There is a bijection $\varphi$ between the computations of~$\Ac$ and the computations of
the slice automaton of~$\Ac$ such that, for every computation $\rho$ of $\Ac$,
\\-- $\rho$ and $\varphi(\rho)$ have the same label and the same weight;
\\-- the signature of $\rho$ is the sequence of states of $\varphi(\rho)$.
\end{proposition}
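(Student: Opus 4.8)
The plan is to exhibit the bijection $\varphi$ explicitly by reading off, from a computation of $\Ac$, its signature $S(\rho)$, and then verifying that this sequence of slices is a legal run of the slice automaton with matching label and weight. The key structural fact I would first establish is that, because $\Ac$ is $\delta$-local, each slice $s^{(j)}$ of $\rho$ has odd length (hence lies in $X$): the positions where $\rho$ crosses the boundary between cell $j-1$ and cell $j$ alternate between left-to-right and right-to-left crossings, and since the run begins at position $1$ and ends at position $n+1$, a simple parity/counting argument on crossings of each boundary forces an odd number of crossings at each interior boundary. This is what guarantees $S(\rho)$ is a well-formed sequence of vectors in $X$.

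Next I would show that consecutive slices $s^{(j)}, s^{(j+1)}$ together with the letter $w_j$ satisfy $\theta(s^{(j)}, w_j, s^{(j+1)})$ being defined, and that its value equals the sum (in $\K$, i.e. the $\otimes$-product written additively in the tropical-style notation of \eqref{eq.theta}) of the weights of exactly those transitions of $\rho$ that read letter $w_j$ at position $j$. The recursive definition of $\theta$ and $\eta$ in \eqref{eq.theta} is precisely engineered to peel off the transitions crossing boundary $j$ one at a time, matching each rightward transition from a $Q_+$-state and each leftward transition from a $Q_-$-state against the corresponding entries of the two adjacent slices. I would argue by induction on the length of the slice vectors that $\theta(s^{(j)}, w_j, s^{(j+1)}) = \bigotimes_{r : i_r = j,\, \delta(t_r)=+1}\!E(t_r)\ \otimes \bigotimes_{r : i_r = j,\, \delta(t_r)=-1}\!E(t_r)$, and similarly that the initial and final weights $\mathcal{I}(s^{(1)})$ and $\mathcal{T}(s^{(k+1)})$ collect $I(p_0)$ together with the transitions at the boundary markers $\ltype$ and $\rtype$. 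Summing (multiplying) over all boundaries $j$ and using \emph{commutativity} of $\K$ to reorder the product into the order in which the transitions actually occur along $\rho$ then yields $|\varphi(\rho)| = |\rho|$; this use of commutativity is exactly where the hypothesis on $\K$ is needed.

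For injectivity and surjectivity I would reconstruct $\rho$ from a run of $\Sc$. Given a computation $(s^{(1)}, w_1, s^{(2)}, \dots, w_k, s^{(k+1)})$ of the slice automaton, each triple $(s^{(j)}, w_j, s^{(j+1)})$ lies in $\underline\theta$, so the recursive decomposition of $\theta$ determines a unique matching between the odd entries of the two slices realized by genuine transitions of $\Ac$; threading these matchings together across all boundaries reconstitutes a unique sequence of configurations, which one checks is a computation of $\Ac$ with signature equal to the given slice sequence. This gives a two-sided inverse to $\varphi$, hence a bijection, and the second bullet (signature of $\rho$ is the state sequence of $\varphi(\rho)$) holds by construction.

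The main obstacle I expect is the bookkeeping in the inductive weight computation: one must be careful that the recursion in \eqref{eq.theta} visits the transitions crossing a given boundary in the correct nested order (the interleaving of $Q_+$- and $Q_-$-transitions as the run zig-zags across the boundary), and that the reordering against the true temporal order of $\rho$ is legitimate. Verifying that the recursion is well-founded and that $\theta$ is \emph{uniquely} defined on each admissible triple — which the paper already asserts follows from $\delta$-locality — is the crux, since $\delta$-locality is what prevents a state from being matched ambiguously to both a left- and a right-transition. Everything else is a matter of threading the definitions through carefully.
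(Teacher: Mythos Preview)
Your proposal is correct and follows essentially the same approach as the paper: both establish the bijection by matching each run of $\Ac$ with its signature and showing, via the recursive structure of $\theta$ and $\eta$, that this signature is precisely a run of the slice automaton with the same label and weight. The only cosmetic difference is that the paper presents the induction from the $\Sc$-side (reconstructing the two-way run from a prefix of the slice run), whereas you start from the $\Ac$-side; you are also more explicit than the paper about where commutativity of $\K$ is invoked, which is a point in your favor.
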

\begin{proof}
Let $\Sc$ be the slice automaton of~$\Ac$.
Let $\pi$ be a run in $\Sc$ with label $w$. Let $\pi^{(k)}$ be the prefix of length $k$ of $\pi$ and let $(v^{(0)},\ldots,v^{(k)})$ be
the sequence of states of $\pi^{(k)}$. We show by induction on $k$ that from $\pi^{(k)}$, there is a unique way to retrieve
the restriction of a run of $\Ac$ on $w$ to the $k$ first letters. Moreover, the weight of $\pi^{(k)}$ (including initial weight)
is equal to the weight of this restriction.
If $k=0$, $\pi^{(k)}$ is reduced to an initial slice.
By Equation~\ref{eq.theta}, the restriction of the path in the two-way automaton is uniquely defined: $v^{(0)}_1$ is initial with weight
$I(v^{(0)}_1)$, and for every $r$ in $[1;(v^{(0)}-1)/2]$, there is a transition $\Tr{v^{(0)}_{2r}}{\ltype,\rightarrow|h_r}{v^{(0)}_{2r+1}}$;
the weight of this restriction is actually the initial weight of $v^{(0)}$ in $\Sc$.
If $k>0$, we consider the restriction built for $k-1$; this restriction corresponds to a disjoint union of parts of the computations and there is only one way to
connect them to the states of the slice $v^{(k)}$ (since $\Ac$ is $\delta$-local). The weight of the transition between $v^{(k-1)}$ and $v^{(k)}$ is exacltly
the sum of the weights of the new transitions involved in the restriction.

Finally, from the restriction of length $|w|$, if we consider $v^{(|w|)}$ as a final state of $\Sc$, by an argument similar to the initial state, we obtain that
there is one and only one run in $\Ac$ that corresponds to a given run in $\Sc$.
\end{proof}
\subsection{Reduced computations and one-way automata}

In unweighted (or Boolean) automata, two-way automata describe exactly the same languages as one-way automata~\cite{Shepherdson59,RabinScott59}.
It is not always the case with weighted automata. For instance, let $\K$ be the semiring of languages of the alphabet $\{x,y\}$.
It is not difficult to design a deterministic two-way $\K$-automaton over the alphabet $\{a\}$
such that the image of $a^n$ is $x^ny^n$ (a first left-right traversal outputs an $x$ for each $a$,
then the automaton comes back to the beginning of the word and a second left-right traversal outputs a $y$ for each $a$). This function is obviously not rational and can not be realized by a one-way $\K$-automaton.

\begin{proposition}\label{cor:red-ow}
Let $\K$ be a \emph{commutative} semiring and let $\Ac$ be a $\delta$-local two-way $\K$-automaton.
There exists a (finite) one-way $\K$-automaton $\Bc$ such that
there is a bijection $\varphi$ between the reduced computations of~$\Ac$ and the computations
of~$\Bc$ such that, for every reduced computation $\rho$ of $\Ac$,
\\-- $\rho$ and $\varphi(\rho)$ have the same label and the same weight;
\\-- the signature of $\rho$ is the sequence of states of $\varphi(\rho)$.
\end{proposition}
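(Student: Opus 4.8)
The plan is to cut the infinite slice automaton of $\Ac$ down to a finite subautomaton by keeping only those slices that can occur in a \emph{reduced} computation, and then to show that this finite automaton realizes exactly the required bijection. First I would invoke Proposition~\ref{prop:slices}: since $\Ac$ is $\delta$-local and $\K$ is commutative, its (infinite) slice automaton $\Sc=(X,A,\theta,\mathcal{I},\mathcal{T})$ carries a label- and weight-preserving bijection with the computations of $\Ac$, under which the sequence of states of a computation of $\Sc$ is exactly the signature of the corresponding computation of $\Ac$. Thus it suffices to single out, among the computations of $\Sc$, those whose preimage in $\Ac$ is reduced, and to check that they are precisely the computations of a finite subautomaton.

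The key lemma I would establish is that a computation $\rho$ of $\Ac$ is reduced if and only if every slice of $S(\rho)$ is a vector of pairwise distinct states. In the forward direction, if $\rho$ is reduced then no configuration $(p,i)$ occurs twice; in particular, for a fixed position $j$ the states occurring at position $j$ are pairwise distinct, and since every entry of the slice $s^{(j)}$ is the state of a configuration at position $j$, that slice has distinct entries. Conversely, if some slice $s^{(j)}$ contains the same state $p$ twice, these two entries stem from two distinct configurations $(p,j)$, so $\rho$ contains a repeated configuration and is not reduced; making this precise requires checking that every visit of the head to position $j$ is recorded in $s^{(j-1)}$ or in $s^{(j)}$, so that a repeated configuration at $j$ necessarily produces a repeated state in one of these two adjacent slices.

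As an immediate consequence, in a reduced computation each position is visited at most $|Q|$ times, so every slice has length at most $|Q|$ and pairwise distinct entries; call such a vector a \emph{reduced slice}, and note that there are only finitely many of them. I would then let $\Bc$ be the subautomaton of $\Sc$ induced by the set of reduced slices, with $\theta$, $\mathcal{I}$ and $\mathcal{T}$ restricted accordingly; $\Bc$ is finite. Since a one-way computation is merely a path, a computation of $\Sc$ is a computation of $\Bc$ exactly when all of its states are reduced slices, which by the key lemma corresponds precisely to the reduced computations of $\Ac$. Restricting the bijection of Proposition~\ref{prop:slices} to these computations yields the desired $\varphi$: it preserves labels and weights, and the state sequence of $\varphi(\rho)$ is the signature of $\rho$ because $\Bc$ inherits the states of $\Sc$.

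The step I expect to be the main obstacle is the backward direction of the key lemma, namely turning an arbitrary repeated configuration into a repeated entry within a \emph{single} slice. This requires care about the bookkeeping of exactly which boundary-crossings the definition of a slice records, since one configuration at position $j$ may be involved in crossings on both of its adjacent boundaries, and one must argue that a repeat still surfaces inside $s^{(j-1)}$ or $s^{(j)}$ rather than being split between them.
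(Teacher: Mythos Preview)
Your overall strategy---restrict the infinite slice automaton to a finite set of ``good'' slices and invoke Proposition~\ref{prop:slices}---is exactly what the paper does. The gap is in your key lemma: the characterisation ``$\rho$ is reduced iff every slice has pairwise distinct entries'' is \emph{false}, and the error lies in the forward direction, not the backward one you flagged.

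Look again at the definition of the $j$-th slice: it collects the pairs $(p_r,i_r)$ with $(i_r,i_{r+1})=(j,j+1)$ \emph{or} $(i_{r-1},i_r)=(j,j-1)$. In the first case $i_r=j$, but in the second $i_r=j-1$; so the entries of $s^{(j)}$ are states of configurations at \emph{two} positions, $j$ and $j-1$, not just $j$. If you unfold the recursion defining $\theta$ and $\eta$ you will see that the indices alternate: odd-indexed entries sit at position $j$, even-indexed ones at position $j-1$. A reduced run therefore forces distinctness only \emph{within each parity class}---the state $p$ may perfectly well occur once as a configuration $(p,j)$ and once as $(p,j-1)$ in the same slice, and that is compatible with $\rho$ being reduced. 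This is precisely why the paper defines
\[
V_k=\{v\in Q^{2k+1}\mid v_i=v_j\Rightarrow i\not\equiv j\pmod 2\},
\]
and gets slices of length up to $2|Q|-1$, not $|Q|$. Your set of ``reduced slices'' is a strict subset of $V$, so your automaton $\Bc$ omits states that genuinely occur as slices of reduced computations; the bijection then fails because some reduced runs of $\Ac$ have no image in $\Bc$.

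The fix is simply to replace ``pairwise distinct'' by ``distinct at positions of the same parity''; with that change your argument goes through and coincides with the paper's.
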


\begin{proof}
Let $\Ac=(Q,A,E,I,T)$ be a two-way $\K$-automaton.
We consider vectors of elements of $Q$ such that no state of $Q$ appears twice at positions with the same parity. For all $k$ in $\N$, we set
\begin{equation}
\begin{split}
V_k=&\{v\in Q^{2k+1}\mid v_i=v_j\Rightarrow i\neq j\mod 2\}\\
=&\{v\in Q^{2k+1}\mid\forall p\in Q,\forall s\in[0;1], |\{i\mid v_i=p\text{ and  }
i=s\mod 2\}|\leqslant 1\}
\end{split}
\end{equation}
For every $k$ larger than $|Q|-1$, $V_k$ is empty. Let $V = \bigcup_kV_k$; we define
the one-way $\K$-automaton with set of states~$V$.
It is straightforward that a run is reduced if and only if every slice of this run is in $V$.

By Proposition~\ref{prop:slices}, the restriction of the slice automaton to $V$
gives a finite automaton that fulfils the proposition.
\end{proof}

Actually, the sufficient condition for the finiteness of the trim part of the slice automaton can be weaken.
If the number of slices of a two-way automaton is finite, it is equivalent to a one-way automaton. Unfortunately, this condition is not easy to check and is not a necessary condition.

\begin{example}
\begin{figure}
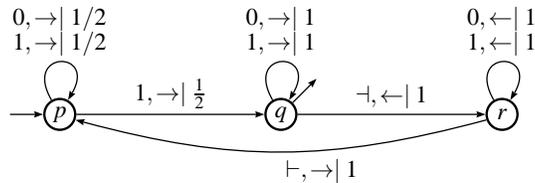
\FixVCScale{0.7}
\centering\VCCall{Dessins/twowayp.tex}
\caption{The two-way $\mathbb{Q}$-automaton $\Ac_2$.}
\end{figure}
The two-way $\mathbb{Q}$-automaton $\Ac_2$ computes for each word $w$ over the alphabet $\{0,1\}$ the value $\frac x{1-x}$, where $x=\sum_{i\in[1;|w|]}\frac{w_i}{2^i}$.
Although $\mathbb{Q}$ is commutative, this two-way automaton is not equivalent to any one-way $\mathbb{Q}$-automaton; $s_2=|\Ac_2|$ is not a rational series.
\end{example}

% In some cases (that we will explore further), the behaviour of the automaton depends
% only of reduced computations. Then, the slice automaton can be restrained to the slices that can appear in reduced computations. The result is a finite one-way automaton.

% \begin{corollary}\label{cor:red-ow}
% Let $\K$ be a \emph{commutative} semiring and let $\Ac$ be a $\delta$-normalized two-way $\K$-automaton.
% There exists a (finite) one-way $\K$-automaton $\Bc$ such that
% there is a bijection $\varphi$ between the computations of~$\Ac$ and the computations of
% of~$\Bc$ such that, for every reduced computation $\rho$ of $\Ac$,
% \\-- $\rho$ and $\varphi(\rho)$ have the same label and the same weight;
% \\-- the signature of $\rho$ is the sequence of states of $\varphi(\rho)$.
% \end{corollary}

% \begin{proof}
% Let $\Ac=(Q,A,E,I,T)$ be a two-way $\K$-automaton.
% We consider vectors of elements of $Q$ such that no state of $Q$ appears twice at positions with the same parity. For all $k$ in $\N$, we set
% \begin{align*}
% V_k=&\{v\in Q^{2k+1}\mid v_i=v_j\Rightarrow i\neq j\mod 2\}\\
% =&\{v\in Q^{2k+1}\mid\forall p\in Q,\forall s\in[0;1], |\{i\mid v_i=p\text{ and  }
% i=s\mod 2\}|\leqslant 1\}
% \end{align*}
% For every $k$ larger than $|Q|-1$, $V_k$ is empty. Let $V = \bigcup_kV_k$; we define
% the one-way $\K$-automaton with set of states~$V$.
% It is straighforward that a run is reduced if and only if every slice of this run is in $V$.

% By Proposition~\ref{prop:slices}, the restriction of the slice automaton to $V$
% gives a finite automaton that fulfils the corollary.
% \end{proof}

\begin{example}
Let $\Bc_1'$ be the trim part of the slice automaton of $\Ac_1'$ (Figure~\ref{fig:onewayc}).
In this particular case, although $\Ac_1$ is not $\delta$-local, the slice automaton $\Bc_1$ of $\Ac_1$
(Figure~\ref{fig:oneway}) is also unambiguous.
%Notice that the two-way automaton of Figure~\ref{fig:twoway} is deterministic while the automaton of Figure~\ref{fig:oneway} is only unambiguous.
It has been shown in~\cite{KLMP04} that there is no deterministic one-way distance automaton
equivalent to these automata.
%The states $\StackThreeLabels p{s_+}q$ and $\StackThreeLabels p{s_-}q$
%of $\Bc_1$ can be merged: they have the same outgoing transitions.
%The result is the unambiguous automaton $\Bc'_1$ on Figure~\ref{fig:oneway} right.

\begin{figure}\FixVCScale{0.7}
\centering\VCCall{Dessins/onewayq.tex}
\caption{The unambiguous one-way distance automata $\Bc_1$.}
\label{fig:onewayc}
\end{figure}
\begin{figure}\FixVCScale{0.7}
\centering\VCCall{Dessins/onewayp.tex}
\caption{The unambiguous one-way distance automata $\Bc_1$.}
\label{fig:oneway}
\end{figure}
\end{example}

\section{Unambiguity and Determinism}
Since every computation in an unambiguous two-way automaton is reduced,
Proposition~\ref{cor:red-ow} implies the following statement.

\begin{proposition}
Let $\K$ be a commutative semiring. Every unambiguous two-way $\K$-automaton is equivalent to
an unambiguous one-way $\K$-automaton.
\end{proposition}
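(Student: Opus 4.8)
The plan is to chain together the three structural results established above. Starting from an unambiguous two-way $\K$-automaton $\Ac$, I would first apply Proposition~\ref{prop:Ddet} to obtain a $\delta$-local in-covering $\Ac'$ of $\Ac$. By the proposition on (in-)coverings, the associated morphism induces a label- and weight-preserving bijection between the computations of $\Ac'$ and those of $\Ac$, so $\Ac'$ is equivalent to $\Ac$ and, as remarked there, is again unambiguous. This reduces the problem to the $\delta$-local case, which is exactly the hypothesis needed to invoke the slice construction.

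Next, since $\Ac'$ is unambiguous, every word labels at most one computation, and every such computation is reduced (unambiguous automata have only reduced computations). I would then apply Proposition~\ref{cor:red-ow} to the $\delta$-local automaton $\Ac'$ over the commutative semiring $\K$, obtaining a finite one-way $\K$-automaton $\Bc$ together with a bijection $\varphi$ between the reduced computations of $\Ac'$ and the computations of $\Bc$ that preserves both labels and weights. The crucial observation is that, because $\Ac'$ is unambiguous, its reduced computations are precisely all of its computations; hence $\varphi$ is a label- and weight-preserving bijection between \emph{all} computations of $\Ac'$ and all computations of $\Bc$.

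From this bijection the two conclusions follow. For equivalence, fix a word $w$: the weight $\langle|\Ac'|,w\rangle$ is the sum of the weights of the computations of $\Ac'$ labelled by $w$, which, transported through $\varphi$, equals the sum of the weights of the computations of $\Bc$ labelled by $w$, that is $\langle|\Bc|,w\rangle$; combined with $\Ac'\equiv\Ac$ this gives $\Bc\equiv\Ac$. For unambiguity, if some word labelled two distinct computations of $\Bc$, their $\varphi$-preimages would be two distinct reduced computations of $\Ac'$ with the same label, contradicting the unambiguity of $\Ac'$; hence $\Bc$ is unambiguous. The only genuinely delicate step is the passage from ``reduced computations'' to ``all computations'': it rests entirely on the earlier remark that unambiguity forces reducedness, and this is exactly where the hypothesis that $\Ac$ is unambiguous (rather than merely having finitely many slices) enters the argument.
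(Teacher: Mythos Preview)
Your argument is correct and follows exactly the route the paper intends: the paper states this proposition as an immediate consequence of Proposition~\ref{cor:red-ow} together with the observation that unambiguous automata have only reduced computations, and you have simply spelled out the implicit preliminary step of passing to a $\delta$-local in-covering via Proposition~\ref{prop:Ddet}. There is nothing to add; your write-up is a faithful expansion of the paper's one-line justification.
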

% \begin{proof}
% By Lemma~\ref{lem:red}, every computation in an unambiguous two-way $\K$-automaton is reduced.
% Then, by Corollary~\ref{cor:red-ow}, there exists an equivalent one-way $\K$-automaton.
% Since there is at most one computation with a given label in $\Ac$, there is also
% at most  one computation with this label in $\Bc$: $\Bc$ is unambiguous.
% \end{proof}

A unambiguous one-way automaton can obviously be seen as a unambiguous two-way automaton.
In this part, we show that an unambiguous  one-way automaton can actually be converted into a determinstic two-way automaton.

\subsection{From Unambiguous one-way to Deterministic two-way Automata}

\begin{definition}\label{def:deterministic}
A two-way automaton is deterministic if
\\i) it has at most one initial state;
\\ii) for every state~$p$ and every letter~$a$, there is at most one transition outgoing from~$p$ with label~$a$;
\\iii) for every final state~$p$, there is no transition outgoing from~$p$ with label~$\rtype$.
\end{definition}

The last condition means that if a final state is reached at the end of the word, there is no nondeterministic choice between ending the computation and reading the right mark to continue.

\begin{theorem}\label{th:unambdet}
Let $\K$ be a semiring.
Every unambiguous one-way $\K$-automaton is equivalent to a deterministic two-way $\K$-automaton.
\end{theorem}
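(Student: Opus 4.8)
My plan is to convert an unambiguous one-way $\K$-automaton $\Ac=(Q,A,E,I,T)$ into a deterministic two-way $\K$-automaton by using the two-way head motion to \emph{simulate the backtracking search} that an unambiguous automaton implicitly allows. The key observation is that unambiguity gives us, for each accepted word $w$, a \emph{unique} successful run; the difficulty in reading $\Ac$ deterministically left-to-right is that at a state $p$ reading $a$ there may be several outgoing transitions $(p,a,q)$, and only one of them lies on the (unique) accepting run. The two-way device can resolve this by exploring the transitions out of $p$ in a fixed order, descending into each candidate successor, and backing up (moving the head left) when a branch fails. Because $\Ac$ is unambiguous, at most one branch from any configuration survives all the way to an accepting configuration, so the search is well-defined and the total weight collected along the unique surviving branch equals the weight $\langle|\Ac|,w\rangle$.

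The construction I would carry out in order. First, fix a total order on $\Eu$ so that the transitions out of each $(p,a)$ are linearly ordered; this makes every nondeterministic choice into a deterministic ``try the next alternative'' step, which gives condition (ii) of Definition~\ref{def:deterministic}. Second, build states of the two-way automaton $\Bc$ that record not only the current state $p$ of $\Ac$ but also whether the head is moving \emph{forward} (committing to a transition and advancing) or \emph{backward} (reporting failure and returning to reconsider the previous choice). A forward state $\langle p,\rd\rangle$ reading $a$ takes the first transition $(p,a,q)$ and moves right into $\langle q,\rd\rangle$; a backward state $\langle p,\ld\rangle$ moves left and, upon arriving at the configuration that chose to enter $p$, switches to the \emph{next} alternative at that earlier position, re-entering forward mode, or, if no alternative remains, continues failing leftward. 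Third, arrange the weights so that the weight of each $\Ac$-transition is emitted exactly once, on the forward step that commits to it, and so that backward (failure) steps carry weight $\one$ and do not double-count; here I would use that the failed branches contribute no accepting run, so by unambiguity they never reach a final configuration and thus never contribute to the behaviour. Fourth, handle the endmarkers: entering at $\ltype$ initializes the search in the unique (by condition (i)) forward initial state, and reaching $\rtype$ in a state of $\Tu$ terminates successfully — which is exactly why Definition~\ref{def:deterministic}(iii) forbids an outgoing $\rtype$-transition from a final state, so that acceptance is not confused with a choice to keep searching. Finally, I would verify equivalence by checking that the deterministic search in $\Bc$ visits precisely the transitions of the unique accepting run of $\Ac$ on each accepted word and accumulates its weight, while on rejected words it either fails finitely or never reaches $\rtype$ in a final state.

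The main obstacle I expect is \emph{guaranteeing that the two-way search halts and produces a finite, correctly-weighted computation} rather than looping, and in particular that the weight bookkeeping is exactly right. The delicate point is that a deterministic backtracking search must remember, at each tape position, which alternative it last tried, yet a two-way automaton has no auxiliary storage beyond its finite state and head position; so I must encode the ``which alternative'' information into the finite state in a way that the leftward/rightward passes can reconstruct it purely from the letter under the head and the state of $\Ac$ being re-entered. Making this reconstruction deterministic and single-valued — so that every configuration of $\Bc$ has at most one successor and the whole run is reduced in the sense of the earlier definition — is where I anticipate the real work, and where unambiguity of $\Ac$ is used most essentially, since it is what ensures that the forward phase that finally succeeds is unique and that its weight is the weight of $w$ in $\Ac$.
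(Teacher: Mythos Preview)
Your backtracking idea has a flaw in the weight handling that cannot be repaired over an arbitrary semiring. You propose to emit the weight of a transition $(p,a,q)$ at the moment the forward phase commits to it, and to let backward (failure) steps carry weight $\one$. But the weight of a run of a two-way $\K$-automaton is the $\otimes$-product of the weights of \emph{all} its transitions, and a deterministic automaton has exactly one run on each word. If that single run first advances along a doomed branch with weight $k\neq\one$, then backtracks, then eventually finds the correct branch, the factor $k$ is still in the product. Since $\K$ need not have multiplicative inverses, no later transition can cancel it. Your claim that ``failed branches never reach a final configuration and thus never contribute to the behaviour'' conflates runs of $\Ac$ with the unique run of $\Bc$: it is precisely the accepting run of $\Bc$ that contains those speculative steps. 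The only way to make backtracking sound for weights would be to postpone every weight until the end, which defeats the purpose.

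A second, independent problem is the one you yourself flag: when backing up from state $q$ at position $i$ to position $i-1$, the automaton must recover which predecessor $p$ it came from. In general several states $p$ have an $a$-transition into $q$, so this cannot be read off from $q$ and the current letter alone; a mere total order on $\Eu$ does not help. Making the reconstruction work with only finite memory is already the nontrivial content of the Boolean result of Hopcroft and Ullman that the paper cites, not something one gets for free.

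The paper avoids both issues by never guessing. Using the transition monoid $M$ of $\Ac$, it maintains at each position $i$ the forward-reachable set $X_i=\supp{I}\mu(w_1\cdots w_i)$ together with the backward-reachable set $Y_i=\mu(w_{i+1}\cdots w_k)\supp{T}$. Unambiguity forces $|X_i\cap Y_i|\le 1$ and, for each letter, at most one transition $(p,a,q)$ with $p\in X_{i-1}$ and $q\in Y_i$; hence the two-way automaton can output the weight of that unique transition exactly once. The set $X_i$ is updated trivially; the delicate part is computing $Y_i$ from $Y_{i-1}$ without rereading the whole suffix, and this is done via Green's $\le_L$ preorder on $M$: one scans right until $\mu(w_i\cdots w_{j+1})$ becomes $L$-equivalent to $\mu(w_{i+1}\cdots w_{j+1})$, deduces $Y_i$ from $Y_{i-1}$ by a left multiplier in $M$, and then returns to position $i$ using the same $L$-invariant to recognise it. This is the idea you are missing; it comes from Carton's work on two-way transducers.
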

This result is an extension of~\cite{HopUll67}, where it is proved that an unambiguous one-way automaton can be simulated by a deterministic two-way automaton.
Our proof is inspired by~\cite{Carton12}, where it is proven that any rational function can be realized by a sequential two-way transducer.
Other works on the conversion of two-way transducers to one-way transducers can be found in~\cite{EngHoo07} or in~\cite{Souza13}.

%The complete description of the construction is in Appendix.
\begin{proof}
Let $\Ac=(I,E,T)$ an unambiguous one-way $\K$-automaton with set of states $Q$.

We consider the mapping $\mu$ from $A$ into the $Q\times Q$ Boolean matrices defined by:
\begin{equation}
\forall a\in A,\ \forall p,q\in Q,\
\mu(a)_{p,q}=1 \Longleftrightarrow (p,a,q)\in\supp{E}.
\end{equation}
The monoid generated by $\{\mu(a)\mid a\in A\}$ is the \emph{transition} monoid $M$ of $\Ac$.
The mapping $\mu$ is naturally extended to a morphism of the monoid $A^*$ onto $M$.
Every subset of $Q$ can be interpreted as a vector in $\B^Q$; for every word $w$,
$\supp{I}\mu(w)$ is the set of states accessible from an initial state by a path with label $w$
and conversely, $\mu(w)\supp{T}$ is the set of states from which a terminal state can be reached by
a path with label $w$.

Since $\Ac$ is unambiguous, for every pair of words $(u,v)$, $\supp{I}\mu(u)\cap \mu(v)\supp{T}$
has at most one element (otherwise there would exist several computations accepting $uv$);
likewise, for every letter, there exists at most one transition $(p,a,q)$ in $\Ac$ with
$p$ in $\supp{I}\mu(u)$ and $q$ in $\mu(v)\supp{T}$ (otherwise there would exist several computations accepting $uav$).

For every word $w=w_1\dots w_k$, for every $i$ in $[0;k]$, we set
\begin{equation*}
X_i(w)=\supp{I}\mu(w_1\dots w_i)
\qquad\text{and}\qquad
Y_i(w)=\mu(w_{i+1}\dots w_k)\supp{T}.
\end{equation*}

We build a deterministic two-way $\K$-automaton $\Bc$ equivalent to $\Ac$.
$\Bc$ has the following property.
If $w$ is accepted by $\Bc$, for every $i$ in $[1;k]$,
the state reached after the last reading of $w_i$ contains the information $(X_i(w),Y_i(w))$:
\begin{center}
\VCCall{Dessins/path.tex}
\end{center}
From $(X_{i-1}(w),Y_{i-1}(w))$ and $(X_i(w),Y_i(w))$, the transition labeled by $w_i$ in the run
with label $w$ can be deduced: it is the only transition $(p,w_i,q)$ with $p$ in $X_{i-1}(w)$
and $q$ in $Y_i(w)$.
Likewise $(X_0,Y_0)$ determines the initial weight and $(X_k,Y_k)$ determines the final weight.

\medskip
The set $X_i$ can easily be deduced from $X_{i-1}$ : $X_i=X_{i-1}\mu(w_i)$.
the computation of $Y_i$ from $Y_{i-1}$ is more subtle.

Let $x$ and $y$ be two elements of $M$. If there exists $z$ in $M$ such that $x=zy$, we say that
$x\leqslant_Ly$; this relation is a preorder. If there also exists $t$ such that $tx=y$, we say
that $x$ and $y$ are L-equivalent.

Let $u$ be a factor of $w$ that starts in $w_{i+1}$. It obviously holds $\mu(w_iu) \leqslant_L\mu(u)$.
If $\mu(w_iu)$ and $\mu(u)$ are L-equivalent, there exists $y$ in M such that $y\mu(w_iu)=\mu(u)$.
In this case, it also holds $y\mu(w_i\dots w_k)=\mu(w_{i+1}\dots w_k)$ and therefore, $Y_i=yY_{i-1}$.
The two-way automaton can perform these computations, since they lie in the transition monoid, which is finite.
The automaton incrementally computes for each $j$ in $[i;k]$ the value of $\mu(w_{i+1}\dots w_j)$
until $\mu(w_i\dots w_j)<_L\mu(w_{i+1}\dots w_j)$ and $\mu(w_i\dots w_{j+1})\equiv_L\mu(w_{i+1}\dots w_{j+1})$.
If it reaches $j=k$, then $Y_i=\mu(w_{i+1}\dots w_j)T$, otherwise, $Y_i=yY_{i-1}$ where $y$ is such
that $y\mu(w_i\dots w_{j+1})=\mu(w_{i+1}\dots w_{j+1})$.

Once $Y_i$ is computed, the automaton must come back to position $i$. The automaton is in some position~$j$
such that $\mu(w_i\dots w_j)<_L\mu(w_{i+1}\dots w_j)$; {\it a fortiori}, for every $r$ in $[i+1;j]$,
$\mu(w_i\dots w_j)<_L\mu(w_r\dots w_j)$. The automaton therefore spans every position smaller than $j$
until it arrives to some point $s$ such that $\mu(w_i\dots w_j)=\mu(w_s\dots w_j)$. It then holds $s=i$.

Let $\mathcal{P}$ be the powerset of $Q$.
The set of states of $\Bc$ is the union of five kinds of states:
\\-- $Q_0=\{i\}$ is the initial state; in this state, the automaton read the input from left to right
     until it reached the right mark $\rtype$. It then goes to the state $\Tu$ in $Q_1$.
\\-- $Q_1\subseteq\mathcal{P}$; in this state, the automaton read the input $w$ from right to left;
     after reading the suffix $v$, the state corresponds to $\mu(v)\supp{T}$. When the left mark $\ltype$
	is reached, the automaton goes to the state $(\Iu,\mu(w)\Tu)$ in $Q_2$.
\\-- $Q_2\subseteq\mathcal{P}^2$; these states corresponds to the pairs $(X_i,Y_i)$; the incoming transitions on
     these states correspond to the transition of the one-way automaton; they are weighted by the corresponding
     weight. Likewise, a state in $Q_2$ may be terminal if it belongs to $\mathcal{P}\times\{T\}$.
     When the automaton is in one of these states, either it stops, or it starts to deal with a new letter;
     this letter is read and stored in the next state which belongs to $Q_3$.
\\-- $Q_3\subseteq A\times M\times \mathcal{P}^2$; the automaton stays in states $Q_3$ as long as it needs to
     compute $Y_i$ from $Y_{i-1}$. It stores the current letter $a$ as well as the image in the transition monoid of
     the factor $u$ that follows $a$ and ends at the current position.
     If the state stores $\mu(u)$ that is $L$-larger than $\mu(au)$ and the read letter $b$ is such that
     $\mu(aub)$ and $\mu(ub)$ are $L$-equivalent, there exists $y$ such that $y\mu(aub)=\mu(ub)$; then $Y_i=yY_{i-1}$, the automaton stores
     $\mu(au)$ and jump to a state in $Q_4$.
\\-- $Q_4\subseteq M^2\times \mathcal{P}^2$; the automaton stays in a state of $Q_4$ while it reads from right to left the word $u$;
     it stores the image of the suffix $v$ of $u$ which is read; it holds $\mu(v)>_L\mu(au)$ until $v=u$; at this point, the automaton read the
     letter $a$ and checks that $\mu(av)=\mu(au)$; at this point, it knows both $X_i$ and $Y_{i+1}$, therefore, it can output the weigth of the unique
     transition compatible with $a$, $X_i$ and $Y_{i+1}$, and jump to the state $(X_{i+1}=X_i\mu(a),Y_{i+1})$.
% here an empty line us deleted (ZL NEMETH)
\begin{equation}
\begin{split}
F&=\{\Tr{i}{a,\rightarrow}{i\in Q_0}\mid a\in A\}\\
&\cup\{\Tr{i}{\rtype,\leftarrow}{\supp{T}\in Q_1}\}\\
&\cup\{\Tr{Y}{a,\leftarrow}{\mu(a)Y\in Q_1}\mid Y\in Q_1, a\in A\}\\
&\cup\{\Tr{Y}{\ltype,\rightarrow|I_k}{(I,Y)\in Q_2}\mid Y\in Q_1, k \in I\cap Y\}\\
&\cup\{\Tr{(X,Y)}{a,\rightarrow}{(a,\one,X,Y)\in Q_3}\mid (X,Y)\in Q_2, a\in A\}\\
&\cup\{\Tr{(a,x,X,Y)}{b,\rightarrow}{(a,x\mu(b),X,Y)\in Q_3}\mid (a,x,X,Y)\in Q_3, b\in A, \mu(a)x\mu(b)<_Lx\mu(b) \}\\
&\cup\{\Tr{(a,x,X,Y)}{b,\leftarrow}{(\mu(a)x,1,X,yY)\in Q_4}\mid (a,x,X,Y)\in Q_3, b\in A, y\in M,
%\\&\hspace*{10cm}
y\mu(a)x\mu(b)=x\mu(b) \}\\
&\cup\{\Tr{(a,x,X,Y)}{\rtype,\leftarrow}{(\mu(a)x,1,X,T)\in Q_4}\mid (a,x,X,Y)\in Q_3 \}\\
&\cup\{\Tr{(x,y,X,Y)}{a,\leftarrow}{(x,\mu(a)y,X,Y)\in Q_4}\mid (x,y,X,Y)\in Q_4, a\in A, x<_L\mu(a)y \}\\\notag
&\cup\{\Tr{(x,y,X,Y)}{a,\rightarrow|k}{(X\mu(a),Y)\in Q_2}\mid (x,y,X,Y)\in Q_4, a\in A, x=\mu(a)y,\\
&\hspace*{8cm}\exists(p,q)\in X\times Y, \exists \Tr{p}{a|k}{q}\in\Ac \}.
\end{split}
\end{equation}
\end{proof}

For every $X$ in $\mathcal{P}$, if the state $(X,\supp{T})$ belongs to $Q_2$, $(X,\supp{T})$ is final
with weight $T_p$, where $p$ is the unique state in $X\cap\supp{T}$.

%Figure is moved forward no to break the bibliography, and [t!] option is inserted (ZLN)
\begin{figure}[t!]
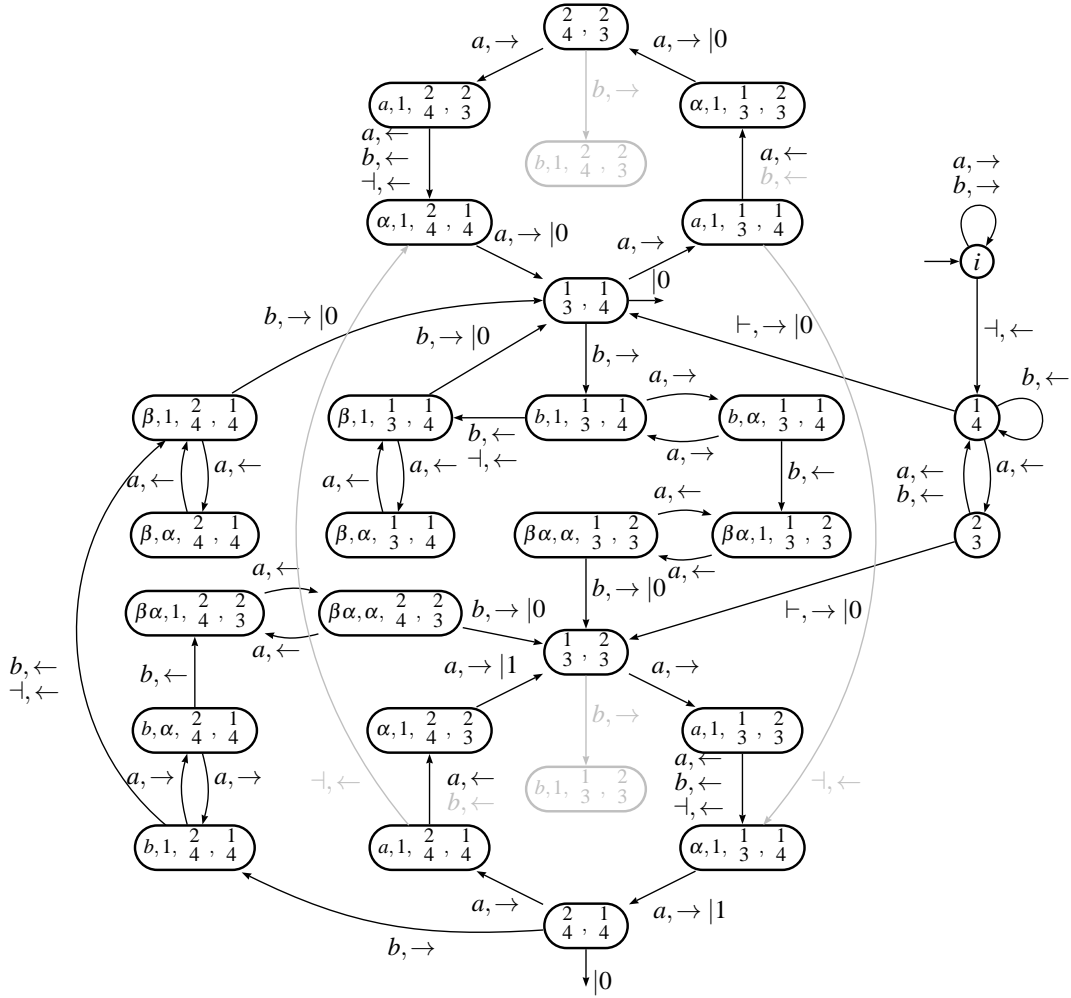
\FixVCScale{1}
\centering\VCCall{Dessins/unamb-det.tex}
\caption{The deterministic two-way distance automaton~$\Dc_1$.
The transitions or states in gray are not accessible. For sake of clearness, transitions outgoing from non accessible states are not drawn. Every column of numbers
is the set of non-zero components of a Boolean vector of size $4$.
The weights are only written on transitions where it comes from the weight of a transition (or from an initial/final
weight) of~$\Bc_1$.
}
\label{fig:unamb-det}
\end{figure}
\begin{example}
Let $\Bc_1$ be the unambiguous one-way automaton of Figure~\ref{fig:oneway}.
We number the states of this automaton: $[p]=1$, $[q]=2$, $[p,s,q]=3$ and $[q,r,p]=4$.
The transition monoid is generated by the following matrices:
\begin{equation}
\alpha=\mu(a)=\left[\begin{array}{cccc}
0 & 1 & 0 & 0\\
1 & 0 & 0 & 0\\
0 & 0 & 0 & 1\\
0 & 0 & 1 & 0\end{array}\right],\qquad
\beta=\mu(b)=\left[\begin{array}{cccc}
1 & 0 & 1 & 0\\
0 & 0 & 0 & 0\\
0 & 0 & 0 & 0\\
1 & 0 & 1 & 0\end{array}\right].
\end{equation}
The following identities hold : $\alpha^2=1$, $\beta^2=\beta$, $\beta\alpha\beta=\beta$.
It then holds $1\equiv_L\alpha$, $\alpha\beta\equiv_L\beta$ and $\alpha\beta\alpha\equiv_L\beta\alpha$,
while $\beta<_L1$ and $\beta\alpha<_L1$. Notice that $\beta$ and $\beta\alpha$ are uncomparable.
We can apply the proof of Theorem~\ref{th:unambdet} to compute the equivalent deterministic two-way automaton~$\Dc_1$ of Figure~\ref{fig:unamb-det}.
\end{example}

\begin{corollary}
Let $\K$ be a commutative semiring. Every unambiguous two-way $\K$-automaton is equivalent to a deterministic one.
\end{corollary}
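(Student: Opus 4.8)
The plan is to obtain the result by composing the two conversion results already established in the paper, exploiting that equivalence of weighted automata (realizing the same series) is transitive. Let $\Ac$ be an unambiguous two-way $\K$-automaton over a commutative semiring $\K$; I want to produce a deterministic two-way $\K$-automaton computing the same series.

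First I would invoke the Proposition stated at the opening of this section: since $\K$ is commutative, $\Ac$ is equivalent to some \emph{unambiguous one-way} $\K$-automaton $\Bc$. This is the step where commutativity is genuinely needed, as it underlies the slice construction; because $\Ac$ is unambiguous all of its computations are reduced, so the finite one-way automaton furnished by Proposition~\ref{cor:red-ow} captures exactly the behaviour of $\Ac$ through a label- and weight-preserving bijection of computations, and unambiguity is inherited since the bijection identifies computations on each word.

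Next I would apply Theorem~\ref{th:unambdet} to $\Bc$: every unambiguous one-way $\K$-automaton is equivalent to a deterministic two-way $\K$-automaton $\Dc$. The key point worth stressing is that this second conversion holds over an \emph{arbitrary} semiring and does not use commutativity, so no additional hypothesis on $\K$ is required at this stage; the commutativity assumption is confined entirely to the first step.

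Finally, composing the two equivalences shows that $\Ac$ and $\Dc$ realize the same series, and $\Dc$ is deterministic and two-way, which is exactly the claim. I do not expect any real obstacle here: the substance of the corollary lies entirely in the preceding Proposition and in Theorem~\ref{th:unambdet}, and the only thing to verify is that their respective behaviour-preserving correspondences compose, i.e.\ transitivity of equivalence, which is immediate. The main conceptual content to highlight in the write-up is simply the localization of the commutativity hypothesis to the two-way-to-one-way direction.
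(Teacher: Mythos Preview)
Your proposal is correct and matches the paper's intent: the corollary is stated without proof precisely because it is an immediate composition of the preceding Proposition (unambiguous two-way $\Rightarrow$ unambiguous one-way over commutative $\K$) with Theorem~\ref{th:unambdet} (unambiguous one-way $\Rightarrow$ deterministic two-way over any $\K$). Your observation that commutativity is used only in the first step is exactly the point.
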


\begin{remark}
This conversion can lead to a combinatorial blow-up. For instance, the deterministic two-way automaton
built from the unambiguous one-way automato~$\Bc_1$ (Figure~\ref{fig:oneway} (right)) has
27 states in its trim part.

A lower bound on the number of states can be computed. Let $n$ be the number of states of the unambiguous one-way automaton.
\begin{itemize}
\item $Q_0$ has one state;
\item $Q_1$ has at most $2^n-1$ states;
\item $Q_2$ is made of pairs of subset of $Q$ which share exactly one element, hence $Q_2$ has at most $n3^{n-1}$ states;
\item $Q_3$ is made of a pair of $Q_2$ endowed with a letter and an element of the transition monoid (that may have $2^{n^2}$ elements); hence $Q_3$ has at most $|A|n3^{n-1}2^{n^2}$ states;
\item $Q_4$ is made of two (non empty) subsets of $Q$ and two elements of the transition monoid; its size is bounded by $2^{2n+2n^2}$.
\end{itemize}
\end{remark}

%%%%%%%
\nocite{*}
\bibliographystyle{eptcs}
\bibliography{2wbiblio}

\begin{thebibliography}{10}
\providecommand{\bibitemdeclare}[2]{}
\providecommand{\surnamestart}{}
\providecommand{\surnameend}{}
\providecommand{\urlprefix}{Available at }
\providecommand{\url}[1]{\texttt{#1}}
\providecommand{\href}[2]{\texttt{#2}}
\providecommand{\urlalt}[2]{\href{#1}{#2}}
\providecommand{\doi}[1]{doi:\urlalt{http://dx.doi.org/#1}{#1}}
\providecommand{\bibinfo}[2]{#2}

\bibitemdeclare{inproceedings}{Anselmo90}
\bibitem{Anselmo90}
\bibinfo{author}{Marcella \surnamestart Anselmo\surnameend}
  (\bibinfo{year}{1990}): \emph{\bibinfo{title}{Two-way Automata with
  Multiplicity}}.
\newblock In: {\sl \bibinfo{booktitle}{ICALP'90}}, {\sl \bibinfo{series}{Lect.
  Notes in Comput. Sci.}} \bibinfo{volume}{443}, pp. \bibinfo{pages}{88--102},
  \doi{10.1007/BFb0032024}.

\bibitemdeclare{article}{BCPS03}
\bibitem{BCPS03}
\bibinfo{author}{Marie-Pierre \surnamestart B{\'e}al\surnameend},
  \bibinfo{author}{Olivier \surnamestart Carton\surnameend},
  \bibinfo{author}{Christophe \surnamestart Prieur\surnameend} \&
  \bibinfo{author}{Jacques \surnamestart Sakarovitch\surnameend}
  (\bibinfo{year}{2003}): \emph{\bibinfo{title}{Squaring transducers: an
  efficient procedure for deciding functionality and sequentiality}}.
\newblock {\sl \bibinfo{journal}{Theor. Comput. Sci.}}
  \bibinfo{volume}{292}(\bibinfo{number}{1}), pp. \bibinfo{pages}{45--63},
  \doi{10.1016/S0304-3975(01)00214-6}.

\bibitemdeclare{inproceedings}{Carton12}
\bibitem{Carton12}
\bibinfo{author}{Olivier \surnamestart Carton\surnameend}
  (\bibinfo{year}{2012}): \emph{\bibinfo{title}{Two-Way Transducers with a
  Two-Way Output Tape}}.
\newblock In: {\sl \bibinfo{booktitle}{DLT'12}}, {\sl \bibinfo{series}{Lect.
  Notes in Comput. Sci.}} \bibinfo{volume}{7410}, pp.
  \bibinfo{pages}{263--272}, \doi{10.1007/978-3-642-31653-1\_24}.

\bibitemdeclare{inproceedings}{Souza13}
\bibitem{Souza13}
\bibinfo{author}{Rodrigo \surnamestart De~Souza\surnameend}
  (\bibinfo{year}{2013}): \emph{\bibinfo{title}{Uniformisation of Two-Way
  Transducers}}.
\newblock In: {\sl \bibinfo{booktitle}{LATA'13}}, {\sl \bibinfo{series}{Lect.
  Notes in Comput. Sci.}} \bibinfo{volume}{7810}, pp.
  \bibinfo{pages}{547--558}, \doi{10.1007/978-3-642-37064-9\_48}.

\bibitemdeclare{article}{EngHoo07}
\bibitem{EngHoo07}
\bibinfo{author}{Joost \surnamestart Engelfriet\surnameend} \&
  \bibinfo{author}{Hendrik~Jan \surnamestart Hoogeboom\surnameend}
  (\bibinfo{year}{2007}): \emph{\bibinfo{title}{Finitary Compositions of
  Two-way Finite-State Transductions}}.
\newblock {\sl \bibinfo{journal}{Fundam. Inform.}}
  \bibinfo{volume}{80}(\bibinfo{number}{1-3}), pp. \bibinfo{pages}{111--123}.
\newblock
  \urlprefix\url{http://iospress.metapress.com/content/143422w0253h8644/}.

\bibitemdeclare{inbook}{EsWe09}
\bibitem{EsWe09}
\bibinfo{author}{Zolt\'an \surnamestart \'Esik\surnameend} \&
  \bibinfo{author}{Werner \surnamestart Kuich\surnameend}
  (\bibinfo{year}{2009}): \emph{\bibinfo{title}{Handbook of Weighted
  Automata}}, chapter \bibinfo{chapter}{Finite Automata}, pp.
  \bibinfo{pages}{69--104}.
\newblock \bibinfo{publisher}{Springer}, \doi{10.1007/978-3-642-01492-5}.

\bibitemdeclare{inproceedings}{HopUll67}
\bibitem{HopUll67}
\bibinfo{author}{John~E. \surnamestart Hopcroft\surnameend} \&
  \bibinfo{author}{Jeffrey~D. \surnamestart Ullman\surnameend}
  (\bibinfo{year}{1967}): \emph{\bibinfo{title}{An Approach to a Unified Theory
  of Automata}}.
\newblock In: {\sl \bibinfo{booktitle}{SWAT (FOCS)}}, \bibinfo{publisher}{IEEE
  Computer Society}, pp. \bibinfo{pages}{140--147}, \doi{10.1109/FOCS.1967.4}.

\bibitemdeclare{article}{KLMP04}
\bibitem{KLMP04}
\bibinfo{author}{Ines \surnamestart Klimann\surnameend},
  \bibinfo{author}{Sylvain \surnamestart Lombardy\surnameend},
  \bibinfo{author}{Jean \surnamestart Mairesse\surnameend} \&
  \bibinfo{author}{Christophe \surnamestart Prieur\surnameend}
  (\bibinfo{year}{2004}): \emph{\bibinfo{title}{Deciding unambiguity and
  sequentiality from a finitely ambiguous max-plus automaton}}.
\newblock {\sl \bibinfo{journal}{Theor. Comput. Sci.}}
  \bibinfo{volume}{327}(\bibinfo{number}{3}), pp. \bibinfo{pages}{349--373},
  \doi{10.1016/j.tcs.2004.02.049}.

\bibitemdeclare{article}{Krob94}
\bibitem{Krob94}
\bibinfo{author}{Daniel \surnamestart Krob\surnameend} (\bibinfo{year}{1994}):
  \emph{\bibinfo{title}{The equality problem for rational series with
  multiplicities in the tropical semiring is undecidable}}.
\newblock {\sl \bibinfo{journal}{Internat. J. Algebra Comput.}}
  \bibinfo{volume}{4}(\bibinfo{number}{3}), pp. \bibinfo{pages}{405--425},
  \doi{10.1142/S0218196794000063}.

\bibitemdeclare{article}{LoMa06}
\bibitem{LoMa06}
\bibinfo{author}{Sylvain \surnamestart Lombardy\surnameend} \&
  \bibinfo{author}{Jean \surnamestart Mairesse\surnameend}
  (\bibinfo{year}{2006}): \emph{\bibinfo{title}{Series which are both max-plus
  and min-plus rational are unambiguous}}.
\newblock {\sl \bibinfo{journal}{RAIRO - Theor. Inf. and Appl.}}
  \bibinfo{volume}{40}(\bibinfo{number}{1}), pp. \bibinfo{pages}{1--14},
  \doi{10.1051/ita:2005042}.

\bibitemdeclare{article}{LoSa13}
\bibitem{LoSa13}
\bibinfo{author}{Sylvain \surnamestart Lombardy\surnameend} \&
  \bibinfo{author}{Jacques \surnamestart Sakarovitch\surnameend}
  (\bibinfo{year}{2013}): \emph{\bibinfo{title}{The validity of weighted
  automata}}.
\newblock {\sl \bibinfo{journal}{Internat. J. Algebra Comput.}}
  \bibinfo{volume}{23}, pp. \bibinfo{pages}{863--913},
  \doi{10.1142/S0218196713400146}.

\bibitemdeclare{article}{RabinScott59}
\bibitem{RabinScott59}
\bibinfo{author}{M.~O. \surnamestart Rabin\surnameend} \&
  \bibinfo{author}{D.~\surnamestart Scott\surnameend} (\bibinfo{year}{1959}):
  \emph{\bibinfo{title}{Finite automata and their decision problems}}.
\newblock {\sl \bibinfo{journal}{IBM J. Res. Dev.}}
  \bibinfo{volume}{3}(\bibinfo{number}{2}), pp. \bibinfo{pages}{114--125},
  \doi{10.1147/rd.32.0114}.

\bibitemdeclare{book}{Saka09}
\bibitem{Saka09}
\bibinfo{author}{Jacques \surnamestart Sakarovitch\surnameend}
  (\bibinfo{year}{2009}): \emph{\bibinfo{title}{Elements of Automata Theory}}.
\newblock \bibinfo{publisher}{Cambridge University Press},
  \doi{10.1017/CBO9781139195218}.

\bibitemdeclare{article}{Shepherdson59}
\bibitem{Shepherdson59}
\bibinfo{author}{J.~C. \surnamestart Shepherdson\surnameend}
  (\bibinfo{year}{1959}): \emph{\bibinfo{title}{The reduction of two-way
  automata to one-way automata}}.
\newblock {\sl \bibinfo{journal}{IBM J. Res. Dev.}}
  \bibinfo{volume}{3}(\bibinfo{number}{2}), pp. \bibinfo{pages}{198--200},
  \doi{10.1147/rd.32.0198}.

\end{thebibliography}

\end{document}